\newtheorem{thm}{Theorem}[section]
\newtheorem{lem}[thm]{Lemma}
\newtheorem{rem}[thm]{Remark}
\numberwithin{equation}{section}
\begin{document}
\title{Hedging of Game Options under Model Uncertainty in Discrete Time}
 \author{Yan Dolinsky\\
 Department of Statistics\\
Hebrew University, Jerusalem\\
 Israel }%

\address{
 Department of Statistics, Hebrew University, Mount Scopus,
Jerusalem 91905 \\
 {e.mail: yan.dolinsky@mail.huji.ac.il}}

 \date{\today}

\begin{abstract}
We introduce a setup of model uncertainty
in discrete time. In this setup we
derive dual expressions for the super--replication
prices of game options with upper semicontinuous payoffs.
We show that the super--replication price is equal to the supremum
over a special (non dominated) set of martingale measures,
of the corresponding Dynkin games values.
This type of results is also new for American options.

\end{abstract}
\subjclass[2010]{Primary: 91G10 Secondary: 60F05, 60G40}%
\keywords{Dynkin games, game options, super--replication, volatility uncertainty, weak convergence.}%
\maketitle \markboth{Y.Dolinsky}{Game Options under Model Uncertainty}
\renewcommand{\theequation}{\arabic{section}.\arabic{equation}}
\pagenumbering{arabic}

\section{Introduction}
\label{sec:1}
A game contingent claim (GCC) or game option, which was introduced in
\cite{Ki1}, is defined
as a contract between the seller and the buyer of the option such
that both have the right to exercise it at any time up to a
maturity date (horizon) $T$. If the buyer exercises the contract
at time $t$ then he receives the payment $Y_t$, but if the seller
exercises (cancels) the contract before the buyer then the latter
receives $X_t$. The difference $\Delta_t=X_t-Y_t$ is the penalty
which the seller pays to the buyer for the contract cancellation.
In short, if the seller will exercise at a stopping time
$\sigma\leq{T}$ and the buyer at a stopping time $\tau\leq{T}$
then the former pays to the latter the amount $H(\sigma,\tau)$
where
\begin{equation*}
H(\sigma,\tau)=X_{\sigma}\mathbb{I}_{\sigma<\tau}+Y_{\tau}\mathbb{I}_{\tau\leq{\sigma}}
\end{equation*}
and we set $\mathbb{I}_{Q}=1$ if an event $Q$ occurs and
$\mathbb{I}_{Q}=0$ if not.

A hedge (for the seller) against a GCC is defined as a pair
$(\pi,\sigma)$ that consists of a self financing strategy $\pi$
and a stopping time $\sigma$ which is the cancellation time
for the seller. A hedge is called perfect if no matter what exercise
time the buyer chooses, the seller can cover his liability to the
buyer.

Until now there is quite a good understanding
of pricing game options in the case where the probabilistic model is given. For details see
\cite{Ki2} and the references therein. However, so far super-replication of American options and game options
was not studied
in the case of volatility uncertainty.
In fact, super--replication under volatility uncertainty
was studied only for European options
(see, \cite{DM}, \cite{DM1}, \cite{N}, \cite{NS} and \cite{STZ}). In the papers
(see, \cite{DM1}, \cite{NS} and \cite{STZ})
the authors established a connection
between $G$--expectation which was introduced by Peng (see \cite{P1} and \cite{P2}), and
super--replication under volatility uncertainty
in continuous time models.

In this paper we introduce a discrete setup of volatility
uncertainty. We consider a simple model which consists of a savings
account and of one risky asset, and we assume that the payoffs are upper semicontinuous.
Our main result says that the super--replication price is equal to the supremum
over a special (non dominated)  set of martingale measures,
of the corresponding Dynkin games values.
In continuous time models,
the problem remains open for American options and game options.

Main results of this paper are formulated in the next section.
In Section 3 we
prove the main results of the paper for continuous payoffs.
This proof is quite elementary and does not use advanced tools.
In section 4 we
extend the main results
for upper semicontinuous payoffs.
This extension is technically involved and requires
the establishment of some stability results
for Dynkin games under weak convergence.

\section{Preliminaries and main results}
\label{sec:2}
\setcounter{equation}{0}
First we introduce a discrete time version of volatility uncertainty.
Let $N\in\mathbb{N}$, $s>0$
and $I=[a,b]\subset\mathbb{R}_{+}$.
Define the set $K\subset\mathbb{R}^{N+1}_{++}$ by
$$K=\{(x_0,...,x_N): x_0=s ,\ \ |\ln x_{i+1}- \ln x_i| \in I, \ \ i<N\}.$$

The  financial market consists of
 a savings account $B$ and
 a risky asset $S$ (stock).
 The stock price
 process is
 $S_k$, $k=0,1,...,N$,
 where $N<\infty$ is
the maturity date or the total number of allowed
trades.
 By discounting,
 we normalize $B\equiv 1$.
 We assume that the stock price
 process satisfies $(S_0,...,S_N)\in K$.
Namely the initial stock price is $S_0=s$ and for any $i<N$ we have
$|\ln S_{i+1}- \ln S_i|\in I$.
This is the only assumption that we make on our financial market
and we do not assume any probabilistic structure.

For any $k=0,1,...,N$ let $F_k, G_k:K\rightarrow\mathbb{R}_{+}$
be upper semicontinuous functions with the following
properties, for any $u,v\in K$, $F_k(u)=F_k(v)$
and $G_k(u)=G_k(v)$ if $u_i=v_i$ for all $i=0,1,...,k$.
Furthermore, we assume that $F_k\leq G_k$.

Consider a game option with the payoff
function
\begin{equation}\label{2.2}
\mathbb{H}(k,l,S)=G_k(S)\mathbb{I}_{k<l}+F_l(S)\mathbb{I}_{l\leq k}, \ \ k,l=0,1,...,N.
\end{equation}
Observe that $\mathbb{H}(k,l,S)$ is the reward
that the buyer receives given that his exercise time is $l$ and that the seller
cancelation time is $k$. Furthermore, the reward $\mathbb{H}(k,l,S)$
depends only on the stock history up to the moment $k\wedge l$.

In our setup a portfolio with initial capital $x$
is a pair $\pi=(x,\gamma)$ where
$\gamma:\{0,1,...,N-1\}\times K\rightarrow\mathbb{R}$
is a progressively measurable process, namely
for any $k=0,1,...,N-1$ and $u,v\in K$,
$\gamma(k,u)=\gamma(k,v)$
if $u_i=v_i$ for all $i=0,1,...,k$. The portfolio value
at time $k$ is given by
\begin{equation}\label{2.3}
V^\pi_k(S)=x+\sum_{i=0}^{k-1}\gamma(i,S)(S_{i+1}-S_i), \ \ S\in K, \ \ k=0,1,...,N.
\end{equation}

A stopping time is a measurable function $\sigma:K\rightarrow\{0,1,...,N\}$
which satisfies the following, for any $u\in K$ and $k=0,1,...,N$
if $\sigma(u)=k$ then $\sigma(v)=k$ for any $v$ with
$v_i=u_i$ for all $i=0,1,...,k$.

A pair $(\pi,\sigma)$ of a self financing
strategy $\pi$ and a stopping time
$\sigma$ will be called a hedge.
A hedge $(\pi,\sigma)$ will called perfect if
\begin{equation}\label{2.4}
 V^\pi_{\sigma(S)\wedge l}(S)\geq \mathbb{H}(\sigma(S),l,S), \ \ \forall S\in K, \ \ l=0,1,...,N.
\end{equation}

The super--replication price is given by
\begin{equation}\label{2.5}
\mathbb{V}=\inf\left\{V^\pi_0|\  \mbox{there} \ \mbox{exists} \ \mbox{a} \
\mbox{stopping} \ \mbox{time} \ \sigma \ \mbox{such} \ \mbox{that}
\ (\pi,\sigma) \ \mbox{is} \ \mbox{a} \ \mbox{perfect} \ \mbox{hedge}\right\}.
\end{equation}

Observe that we do not have any underlying probability measure, and we
require to construct a super--hedge for any possible
values of the stock prices. Similar setup (but not the same)
was studied in \cite{DM} for European options.

We make some preparations before we formulate the main result of the paper.
Let $Z=(Z_0,...,Z_N)$ be the canonical process on the Euclidean space $\mathbb{R}^{N+1}$.
Namely for any $z=(z_0,...,z_N)\in\mathbb{R}^{N+1}$ and $k\leq N$ we have $Z_k(z)=z_k$.
A probability measure $\mathbb P$ supported on $K$ is called a martingale law if for any  $k<N$
\begin{equation}\label{2.6}
\mathbb{E}_{\mathbb P}(Z_N|Z_0,...,Z_k)=Z_k \ \ \mathbb{P} \ \ \mbox{a.s.}
\end{equation}
 where $\mathbb{E}_{\mathbb P}$ denotes the expectation with respect to $\mathbb P$.
Denote by $\mathcal{M}$ the set of
all martingale laws.
Clearly, $\mathcal{M}\neq\emptyset$. For instance the probability measure $\mathbb{P}_b$
which is given by
\begin{eqnarray*}
&\mathbb{P}_b(Z_0=s)=1 \ \ \mbox{and} \\
&\mathbb{P}_b(\ln Z_{i+1}-\ln Z_i=b)=
1-\mathbb{P}_b(\ln Z_{i+1}-\ln Z_i=-b)=\frac{1-e^{-b}}{e^b-e^{-b}}, \ \ i<N,
\end{eqnarray*}
is an element in $\mathcal{M}$.

Let $\mathcal{F}_k=\sigma(Z_0,...,Z_k)$, $k\leq N$ be the canonical filtration,
and let $\mathcal T$ be the set of all stopping times (with respect to the above filtration)
with values in the set $\{0,1,...,N\}$.

The following theorem is the main result of the paper.
\begin{thm}\label{thm2.1}
The super--replication price is given by
\begin{eqnarray*}
&\mathbb V=\inf_{\sigma\in\mathcal{T}}
\sup_{\mathbb P\in \mathcal{M}}
\sup_{\tau\in\mathcal{T}}
\mathbb{E}_{\mathbb P}\mathbb{H}(\sigma,\tau,Z)=\\
&\sup_{\mathbb P\in \mathcal{M}}
\inf_{\sigma\in\mathcal{T}}
\sup_{\tau\in\mathcal{T}}
\mathbb{E}_{\mathbb P}\mathbb{H}(\sigma,\tau,Z)=
\sup_{\mathbb P\in \mathcal{M}}
\sup_{\tau\in\mathcal{T}}\inf_{\sigma\in\mathcal{T}}
\mathbb{E}_{\mathbb P}\mathbb{H}(\sigma,\tau,Z).
\end{eqnarray*}
\end{thm}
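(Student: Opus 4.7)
The plan is to prove the three equalities in three stages: an easy primal--dual inequality $\mathbb{V}\ge(\text{first expression})$, a classical Dynkin minimax identity for each fixed $\mathbb{P}$ giving the equality of the last two expressions, and a backward-induction construction of a perfect hedge whose cost matches the dual value, closing the chain.

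\textbf{Upper bound.} Fix any perfect hedge $(\pi,\sigma)$, any $\mathbb{P}\in\mathcal{M}$, and any $\tau\in\mathcal{T}$. Since $\gamma$ is progressively measurable and $Z$ is a $\mathbb{P}$-martingale for the canonical filtration, the wealth process $V^\pi_k(Z)$ is itself a $\mathbb{P}$-martingale. Optional sampling at the bounded stopping time $\sigma\wedge\tau$ combined with (2.4) gives $V^\pi_0\ge\mathbb{E}_{\mathbb P}\mathbb{H}(\sigma,\tau,Z)$. Taking $\sup_\tau$, $\sup_{\mathbb P}$ and then infimum over perfect hedges yields $\mathbb{V}\ge\inf_\sigma\sup_{\mathbb P}\sup_\tau\mathbb{E}_{\mathbb P}\mathbb{H}$, where one notes that every pathwise stopping time on $K$ is a stopping time under each $\mathbb{P}\in\mathcal{M}$.

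\textbf{Middle equality.} For each fixed $\mathbb{P}\in\mathcal{M}$ the finite-horizon, bounded-payoff Dynkin game $\mathbb{H}(\sigma,\tau,Z)$ admits a value; that is, $\inf_\sigma\sup_\tau\mathbb{E}_{\mathbb P}\mathbb{H}=\sup_\tau\inf_\sigma\mathbb{E}_{\mathbb P}\mathbb{H}$ by the classical Snell-envelope theory for games (Neveu--Kifer). Taking $\sup_{\mathbb P}$ on both sides delivers the equality of the second and third expressions.

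\textbf{Lower bound (the core).} Define a Snell-type family on $K$ by backward induction:
\[
U_N=F_N,\qquad U_k=G_k\wedge\bigl(F_k\vee\mathcal{E}_k[U_{k+1}]\bigr),
\]
where $\mathcal{E}_k[U_{k+1}]$ is the one-step pathwise super-replication functional at time $k$: the infimum over $(x,\gamma)\in\mathbb{R}^2$ such that $x+\gamma(y-S_k)\ge U_{k+1}(S_0,\dots,S_k,y)$ for every admissible next-step value $y\in\{S_ke^r:|r|\in I\}$. On this compact one-period sub-market, standard one-step duality (separation on the concave envelope) identifies $\mathcal{E}_k[U_{k+1}]$ with $\sup_{\mathbb Q}\mathbb{E}_{\mathbb Q}[U_{k+1}]$ over one-step martingale kernels on the admissible increments. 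Stitching the one-step super-replicating strategies and setting $\sigma^*(S)=\min\{k:U_k(S)=G_k(S)\}$ produces a perfect hedge with initial capital $U_0$, so $\mathbb{V}\le U_0$. Concatenating the one-step kernels via Fubini and invoking the pointwise Dynkin recursion under each $\mathbb{P}\in\mathcal{M}$ identifies $U_0$ with $\sup_{\mathbb P}\inf_\sigma\sup_\tau\mathbb{E}_{\mathbb P}\mathbb{H}$, closing the chain of equalities.

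\textbf{Main obstacle.} The delicate step is measurability and the stability of the induction. To know that $\mathcal{E}_k[U_{k+1}]$ is well-defined (and attained by an \emph{honest} $\gamma$ which is progressively measurable in $(S_0,\dots,S_k)$) and that the pointwise Dynkin value can be matched by the concatenation argument, one needs $U_k$ to remain upper semicontinuous along the recursion. For continuous payoffs (Section~3) this is direct from compactness of the admissible increments. For merely USC payoffs (Section~4) the inner $\sup$ and outer $\inf$ need not be attained; here the argument has to be supported by the weak-convergence stability results for Dynkin games proved later in the paper.
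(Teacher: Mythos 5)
Your route is genuinely different from the paper's. The paper proves the easy inequality $\mathbb V\geq\inf_{\sigma}\sup_{\mathbb P}\sup_{\tau}\mathbb{E}_{\mathbb P}\mathbb{H}$ exactly as you do; for the hard inequality (\ref{2.7}) it does not run a pathwise dynamic programming argument at all. Instead it discretizes the log--increments to obtain multinomial markets on $K_n$, shows $\mathbb V\leq\liminf_n\mathbb V_n$ via a uniform bound $M$ on the admissible number of shares (Lemmas \ref{lem3.0}--\ref{lem3.1}), identifies $\mathbb V_n$ with $\sup_{\mathbb P\in\mathcal M_n}\sup_{\tau}\inf_{\sigma}\mathbb{E}_{\mathbb P}\mathbb{H}$ by quoting the Kallsen--K\"uhn duality \cite{KK} for the finite (hence dominated) multinomial market, and then treats upper semicontinuous payoffs by approximating them from above by continuous ones and proving a weak--convergence stability lemma for Dynkin game values (Skorokhod representation together with convergence of conditional laws). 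Your backward recursion $U_k=G_k\wedge\bigl(F_k\vee\mathcal E_k[U_{k+1}]\bigr)$ with one--step robust duality is the ``local'' alternative that later literature uses; if completed it would bypass both the multinomial approximation and the weak--convergence machinery.

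However, as written there is a genuine gap precisely at the step you compress into ``concatenating the one--step kernels via Fubini.'' The inequality $U_0\geq$ (game value under each $\mathbb P\in\mathcal M$) does follow from the recursion; but the converse inequality $U_0\leq\sup_{\mathbb P}\inf_{\sigma}\sup_{\tau}\mathbb{E}_{\mathbb P}\mathbb{H}$ requires constructing, for each $\epsilon>0$, a single $\mathbb P\in\mathcal M$ by pasting one--step martingale kernels that are $\epsilon$--optimal for $\sup_{\mathbb Q}\mathbb E_{\mathbb Q}[U_{k+1}]$ at \emph{every} node, and that needs a measurable selection $(S_0,\dots,S_k)\mapsto\kappa_k(S_0,\dots,S_k)$; Fubini (or Ionescu--Tulcea) only pastes kernels after they have been measurably chosen. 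The same selection problem appears on the primal side: for merely upper semicontinuous $U_{k+1}$ the one--step infimum over $\gamma$ need not be attained, so a perfect hedge must be stitched from progressively measurable $\epsilon$--optimal choices of $\gamma(k,\cdot)$, which you do not provide. These selections can be carried out in this compact, finite--horizon setting, but they are the substance of your route, not a footnote. Your proposed fallback does not repair this: Lemma 4.1 of the paper is a stability statement for the dual quantity under continuous approximation of the payoffs, and it says nothing about measurability or attainment inside your recursion, so it cannot be ``invoked'' to close your argument. Two smaller points: upper semicontinuity does propagate through your recursion (an infimum over $\gamma$ of upper semicontinuous functions is upper semicontinuous, and the supremum over the continuously varying compact increment set preserves it), but this deserves proof rather than assertion; and the one--step duality should be justified by a minimax/weak--compactness argument for bounded upper semicontinuous payoffs with the barycenter constraint, since the admissible increment set $[S_ke^{-b},S_ke^{-a}]\cup[S_ke^{a},S_ke^{b}]$ is disconnected and the naive concave--envelope picture needs this extra care.
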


It is well known that $\inf\sup\geq \sup\inf$, thus in order to prove Theorem \ref{thm2.1}
it is sufficient to prove the following relations
\begin{equation}\label{2.7}
\mathbb V\leq
\sup_{\mathbb P\in \mathcal{M}}
\sup_{\tau\in\mathcal{T}}
\inf_{\sigma\in\mathcal{T}}
\mathbb{E}_{\mathbb P} \mathbb{H}(\sigma,\tau,Z)
\end{equation}
and
\begin{equation}\label{2.8}
\mathbb V\geq\inf_{\sigma\in\mathcal{T}}
\sup_{\mathbb P\in \mathcal{M}}
\sup_{\tau\in\mathcal{T}}
\mathbb{E}_{\mathbb P}\mathbb{H}(\sigma,\tau,Z).
\end{equation}

The first inequality is the difficult one and it will be
proved in Sections 3--4. The second inequality is simpler
and we show it by the following argument.

From (\ref{2.5}) it follows that for any $\epsilon>0$ there exists a perfect hedge
$(\tilde\pi,\tilde\sigma)$
with an initial capital $V^{\tilde\pi}_0=\mathbb V+\epsilon$.
From (\ref{2.3}) we get that for any $\mathbb P\in\mathcal{M}$ the stochastic process
${\{V^{\tilde\pi}_k(Z)\}}_{k=0}^N$ is a martingale with respect to $\mathbb P$. Observe that
$\tilde\sigma(Z)\in\mathcal{T}$, and so from
(\ref{2.4}) we obtain that for any $\tau\in\mathcal{T}$
\begin{equation*}
\mathbb V+\epsilon=V^{\tilde\pi}_0=\mathbb E_{\mathbb P} V^{\tilde\pi}_{\tilde\sigma(Z)\wedge\tau}\geq
\mathbb{E}_{\mathbb P} \mathbb{H}(\tilde\sigma(Z),\tau,Z).
\end{equation*}
The terms $\mathbb P\in\mathcal{M}$ and $\tau\in\mathcal{T}$ are arbitrary,
thus we conclude that
\begin{equation*}
\mathbb V+\epsilon\geq
\sup_{\mathbb P\in \mathcal{M}}
\sup_{\tau\in\mathcal{T}}\mathbb{E}_{\mathbb P} \mathbb{H}(\tilde\sigma(Z),\tau,Z)\geq
\inf_{\sigma\in\mathcal{T}}
\sup_{\mathbb P\in \mathcal{M}}
\sup_{\tau\in\mathcal{T}}
\mathbb{E}_{\mathbb P} \mathbb{H}(\sigma,\tau,Z).
\end{equation*}
By letting $\epsilon\downarrow 0$ we derive (\ref{2.8}).\qed

\begin{rem}
From Theorem 2.1 we obtain the following probabilistic corollary.
$$\inf_{\sigma\in\mathcal{T}}
\sup_{\mathbb P\in \mathcal{M}}
\sup_{\tau\in\mathcal{T}}
\mathbb{E}_{\mathbb P}\mathbb{H}(\sigma,\tau,Z)
=
\sup_{\mathbb P\in \mathcal{M}}
\sup_{\tau\in\mathcal{T}}\inf_{\sigma\in\mathcal{T}}
\mathbb{E}_{\mathbb P}\mathbb{H}(\sigma,\tau,Z).$$
This corollary is not obvious since the set $\mathcal M$ is a set of non dominated
probability measures, and so it does not follow from the results in \cite{KK}.
\end{rem}

\section{Proof of the main result}\label{sec:3}\setcounter{equation}{0}
\label{sec:3}\setcounter{equation}{0}
This section is devoted to the proof of
(\ref{2.7}), for the case where the functions
$F_k,G_k:K\rightarrow\mathbb{R}_{+}$, $k\leq N$
are continuous.

\subsection{Discretization of the space}
Let $n\in\mathbb N$.
Introduce the set
\begin{eqnarray*}
&K_n:=\{(x_0,...,x_N): x_0=s \ \ \mbox{and} \nonumber\\
&|\ln x_{i+1}- \ln x_i| \in \{a,a+(b-a)/n, a+2(b-a)/n,...,b\}\}.
\end{eqnarray*}
Consider a multinomial model
for which the stock price $S=(S_0,...,S_N)$ lies in the set $K_n$.
As before the savings account is given by $B\equiv 1$.
In this model a portfolio with an initial capital $x$ is a pair
$\pi=(x,\gamma)$ where $\gamma:\{0,1,...,N-1\}\times K_n\rightarrow\mathbb{R}$
is a progressively measurable process. A hedge is a pair $(\pi,\sigma)$
which consists of a portfolio strategy $\pi$
and a stopping time $\sigma$. A stopping time
is a map $\sigma:K_n\rightarrow\{0,1,...,N\}$
which satisfies that
if $\sigma(u)=k$ then $\sigma(v)=k$ for any $v$ with
$v_i=u_i$ for all $i=0,1,...,k$. A hedge $(\pi,\sigma)$ will called perfect
if
\begin{equation}\label{3.2}
 V^\pi_{\sigma(S)\wedge l}(S)\geq \mathbb{H}(\sigma(S),l,S), \ \ \forall S\in K_n, \ \ l=0,1,...,N
\end{equation}
where the portfolio value is given by the same formula as (\ref{2.3}).

Let
\begin{equation}\label{3.2+}
\mathbb{V}_n=\inf\left\{V^\pi_0|\  \mbox{there} \ \mbox{exists} \ \mbox{a} \
\mbox{stopping} \ \mbox{time} \ \sigma \ \mbox{such} \ \mbox{that}
\ (\pi,\sigma) \ \mbox{is} \ \mbox{a} \ \mbox{perfect} \ \mbox{hedge}\right\}
\end{equation}
be the super--replication price in the multinomial model.
Next, we introduce a modified super--replication price. Let $M>0$ and
let $\Gamma_M$ be the set of all portfolio strategies
$\pi=(x,\gamma)$ where $\gamma:\{0,1,...,N-1\}\times K_n\rightarrow[-M,M]$. Namely, we consider portfolios
for which the absolute value of the number of stocks is not exceeding $M$.
Consider the super--replication price
$$\mathbb{V}^M_n=\inf_{\pi\in\Gamma_M}\left\{V^\pi_0|\  \mbox{there} \ \mbox{exists} \ \mbox{a} \
\mbox{stopping} \ \mbox{time} \ \sigma \ \mbox{such} \ \mbox{that}
\ (\pi,\sigma) \ \mbox{is} \ \mbox{a} \ \mbox{perfect} \ \mbox{hedge}\right\}.$$

We will need the following technical lemma.
\begin{lem}\label{lem3.0}
There exists a constant $M>0$ (which is independent of $n$)
such that $$\mathbb{V}^M_n=\mathbb{V}_n.$$
\end{lem}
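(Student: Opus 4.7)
The inequality $\mathbb{V}_n \leq \mathbb{V}_n^M$ is trivial for any $M$, since $\Gamma_M$ is contained in the set of all portfolio strategies, so the infimum over $\Gamma_M$ can only be larger. The content of the lemma is the reverse inequality $\mathbb{V}_n^M \leq \mathbb{V}_n$ for a constant $M$ independent of $n$. Since the state space $K_n$ is finite, I would attack this by backward dynamic programming. Set $U_N(u) = F_N(u)$ and, for $k < N$ and $u = (s, u_1, \ldots, u_k)$,
\begin{equation*}
U_k(u) = \min\bigl\{G_k(u),\ \max\bigl\{F_k(u),\ V_k^{\mathrm{cont}}(u)\bigr\}\bigr\},\qquad V_k^{\mathrm{cont}}(u) = \inf_{\gamma \in \mathbb{R}} \max_{x \in X_n(u_k)} \bigl[U_{k+1}(u,x) - \gamma(x - u_k)\bigr],
\end{equation*}
where $X_n(u_k) = \{u_k e^{\pm(a + j(b-a)/n)} : 0 \leq j \leq n\}$ is the (finite) set of reachable successors from $u_k$. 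A standard game-option dynamic programming argument yields $U_0(s) = \mathbb{V}_n$ and produces a perfect hedge by letting $\sigma$ cancel at the first $k$ with $G_k(u) \leq \max\{F_k(u), V_k^{\mathrm{cont}}(u)\}$ and taking $\gamma(k,u) = \gamma^*(k,u)$ for any minimizer of $V_k^{\mathrm{cont}}(u)$. The lemma thus reduces to a uniform bound $|\gamma^*(k,u)| \leq M$.

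The key observation is that $V_k^{\mathrm{cont}}(u)$ is exactly the value at the point $u_k$ of the \emph{upper concave envelope} $\hat{f}$ of the finite set $\{(x, U_{k+1}(u,x)) : x \in X_n(u_k)\}$, and the minimizer $\gamma^*(k,u)$ is a subgradient of $\hat{f}$ at $u_k$; this is the usual one-period LP duality / hyperplane separation picture for super-replication. To convert it into a quantitative bound, I would first prove a uniform estimate $0 \leq U_k \leq C$ on all of $K_n$, where $C := \max_{k \leq N} \sup_{u \in K} G_k(u)$. The upper bound $U_k \leq G_k \leq C$ follows by induction because the seller can always cancel, and $C < \infty$ because $K$ is compact (each coordinate lies in $[se^{-Nb}, se^{Nb}]$) and the $G_k$ are continuous; non-negativity follows inductively by testing $V_k^{\mathrm{cont}}$ against any martingale measure supported on $X_n(u_k)$, which exists since $u_k$ lies in the convex hull $[u_k e^{-b}, u_k e^b]$ of $X_n(u_k)$. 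A concave function on $[u_k e^{-b}, u_k e^b]$ with values in $[0, C]$ has subgradient at the interior point $u_k$ controlled by a two-sided secant estimate, giving
\begin{equation*}
|\gamma^*(k,u)| \leq \frac{C}{u_k \min(1 - e^{-b},\ e^b - 1)} \leq \frac{C\, e^{Nb}}{s \min(1 - e^{-b},\ e^b - 1)} =: M,
\end{equation*}
using $u_k \geq s e^{-Nb}$; this $M$ depends only on $s, a, b, N$ and the sup norms of the $G_k$, not on $n$.

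The main obstacle is the geometric identification $V_k^{\mathrm{cont}}(u) = \hat{f}(u_k)$ together with the subgradient characterization of $\gamma^*$; both are standard finite-dimensional facts but must be stated carefully because the successor set $X_n(u_k)$ depends on $n$. Once these are in place, the secant estimate for subgradients of a bounded concave function is elementary. The reason $M$ is truly $n$-free is that the secant bound depends only on the ambient interval $[u_k e^{-b}, u_k e^b]$ enclosing the grid $X_n(u_k)$ and on the uniform bound $C$ on $U_k$, not on the grid spacing $(b-a)/n$.
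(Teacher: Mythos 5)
Your plan is correct, but it takes a genuinely different route from the paper. The paper never constructs an optimal hedge: it fixes an \emph{arbitrary} perfect hedge $(\pi,\sigma)$ with initial capital at most $A=\max_k\sup_K F_k$ (such a restriction is harmless because the do-nothing, never-cancel strategy super-replicates from $A$), sets $\gamma\equiv 0$ after $\sigma$, and then argues by \emph{forward} induction: since the payoff is non-negative, perfection forces the next-period portfolio value to stay non-negative under both extreme moves $e^{\pm b}$, which yields $|\gamma(k,S)|\leq V^{\pi}_{k}(S)/\bigl((1-e^{-b})S_k\bigr)$ together with the growth estimate $V^\pi_{k\wedge\sigma(S)}(S)\leq A\bigl(1+\tfrac{e^b-1}{1-e^{-b}}\bigr)^k$; combining with $S_k\geq s e^{-bk}$ gives an $n$-free $M$, so every relevant hedge already lies in $\Gamma_M$. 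You instead build one specific optimal hedge by backward dynamic programming and bound its delta as a supergradient of the concave envelope of the one-step value function, using the uniform bound $0\leq U_k\leq C$ and a two-sided secant estimate. Both arguments rest on the same mechanism (portfolio value trapped in a bounded interval, stock forced to move by a factor in $[e^{-b},e^{-a}]\cup[e^a,e^b]$), but the paper's version is more elementary and self-contained: it needs no pricing theory at all, whereas yours carries the extra burden of justifying $U_0(s)=\mathbb{V}_n$, the perfection of the DP hedge, and the envelope/minimax identification of $V^{\mathrm{cont}}_k$ and $\gamma^*$ --- all standard in a finite-state model, and your sketches of them are sound, but they amount to re-deriving a superhedging duality that the paper only invokes (via Kallsen--K\"uhn) later and for a different purpose. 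What your route buys is a cleaner, essentially sharp constant (no $\bigl(1+\tfrac{e^b-1}{1-e^{-b}}\bigr)^N$ factor) and an explicit optimal strategy; what the paper's route buys is brevity and the fact that the bound applies to every near-optimal hedge, which is exactly what is needed to conclude $\mathbb{V}^M_n\leq\mathbb{V}_n$. One small economy you could also import from the paper: your constant $C=\max_k\sup_K G_k$ can be replaced by $\max_k\sup_K F_k$, since the DP value function is dominated by the price of the never-cancel strategy.
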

\begin{proof}
Clearly, $\mathbb{V}^M_n\geq\mathbb{V}_n.$  Thus its
sufficient to show that $\mathbb{V}^M_n\leq\mathbb{V}_n.$
Set
$$A=\max_{0\leq k\leq N}\sup_{x\in K}  F_k(x).$$
Clearly there exists a perfect hedge with an initial capital $A$ (in this case the investor
does not trade and stop only at the maturity). Let $(\pi,\sigma)$ be a perfect hedge in the sense
of (\ref{3.2}). We will assume (without loss of generality)
that the initial capital $V^\pi_0$ is no bigger than $A>0$.
Furthermore, since the option is exercised no later than in the moment $\sigma(S)$,
we can assume (without loss of generality)
 that $\gamma(k,S)\equiv 0$ for $k\geq \sigma(S)$.

First let us prove by induction that
for any $S\in K_n$ and $k=0,1,...,N$,
\begin{equation}\label{3.3}
V^\pi_{k\wedge\sigma(S)}(S)\leq A\left(1+\frac{e^b-1}{1-e^{-b}}\right)^{k} \  \ \mbox{and} \ \ |\gamma(k,S)|\leq \frac{A\left(1+\frac{e^b-1}{1-e^{-b}}\right)^{k}}{(1-e^{-b})S_k}.
\end{equation}
If $\sigma\equiv 0$ then the statement is clear. Thus we
assume that $\sigma(S)>0$ for any ($\sigma$ is a stopping time) $S\in K_n$.
Choose $S\in K_n$. Clearly, the portfolio value at time $1$ should be non
negative, for any possible growth rate of the stock. In particular we have,
$$V^\pi_{0}(S)+\gamma(0,S) s (e^b-1)\geq 0 \ \ \mbox{and} \ \ V^\pi_{0}(S)+\gamma(0,S) s (e^{-b}-1)\geq 0$$
and we conclude that $|\gamma(0,S)|\leq \frac{A}{s(1-e^{-b})}$. Thus (\ref{3.2}) holds for $k=0$.
Next, assume that $(\ref{3.3})$ holds for $k$, and we prove it for $k+1$.
From the induction assumption we get
\begin{eqnarray*}
&V^\pi_{(k+1)\wedge\sigma(S)}(S)= V^\pi_{k\wedge\sigma(S)}(S)+
\gamma(k\wedge\sigma(S),S)(S_{(k+1)\wedge\sigma(S)}- S_{k\wedge\sigma(S)})\leq\\
& A\left(1+\frac{e^b-1}{1-e^{-b}}\right)^{k}+
\frac{A\left(1+\frac{e^b-1}{1-e^{-b}}\right)^{k}}{(1-e^{-b})S_k}S_k(e^b-1)\leq
A\left(1+\frac{e^b-1}{1-e^{-b}}\right)^{k+1},
\end{eqnarray*}
as required. Next,
if $\sigma(S)\leq k+1$ then $\gamma(k+1,S)=0$. If $\sigma(S)>k+1$,
then the portfolio value at time $k+2$ should be non negative, for any possible growth rate of the stock.
Thus,
$$V^\pi_{k+1}(S)+\gamma(k+1,S) S_{k+1} (e^b-1)\geq 0 \ \ \mbox{and} \ \ V^\pi_{k+1}(S)+\gamma(k+1,S) S_{k+1} (e^{-b}-1)\geq 0$$
 and so,
 $$|\gamma(k+1,S)|\leq \frac{V^\pi_{k+1}(S)}{(1-e^{-b})S_{k+1}}\leq \frac{A\left(1+\frac{e^b-1}{1-e^{-b}}\right)^{k+1}}{(1-e^{-b})S_{k+1}}.$$
 This completes the proof of (\ref{3.3}).
Finally,
observe that $S_k\geq s e^{-b k}$ and so, we conclude that
for $M:=A s \frac{e^{bN}}{1-e^{-b}}\left(1+\frac{e^b-1}{1-e^{-b}}\right)^N$, we have
$|\gamma(k,S)|\leq M$ for all $k,S$.
 \end{proof}
Now, we can easily prove the following lemma.
\begin{lem}\label{lem3.1}
$$\mathbb V\leq \lim\inf_{n\rightarrow\infty}\mathbb{V}_n.$$
\end{lem}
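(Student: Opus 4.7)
The plan is to lift almost-optimal multinomial hedges back to the full path space $K$ by projecting each $S\in K$ onto a nearby element of $K_n$, and then absorb the resulting discretization error into a vanishing correction to the initial capital.

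Fix $\epsilon>0$. For each $n$ choose a perfect hedge $(\pi_n,\sigma_n)$ in the $n$-th multinomial model with $V^{\pi_n}_0\le \mathbb{V}_n+\epsilon$, and by Lemma \ref{lem3.0} arrange that $\pi_n=(V^{\pi_n}_0,\gamma_n)$ with $|\gamma_n|\le M$ for a constant $M$ independent of $n$. Define a path-wise projection $\phi_n:K\to K_n$ by rounding log-returns: given $S\in K$, set $r_i=\ln(S_{i+1}/S_i)\in[-b,-a]\cup[a,b]$, let $\rho_n(|r_i|)$ be the nearest element of $\{a+j(b-a)/n:0\le j\le n\}$ to $|r_i|$, and set $\phi_n(S)_0=s$, $\phi_n(S)_{i+1}=\phi_n(S)_i\exp(\mathrm{sgn}(r_i)\rho_n(|r_i|))$. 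Since $\phi_n(S)_k$ depends only on $S_0,\dots,S_k$, the extensions $\tilde\gamma_n(k,S):=\gamma_n(k,\phi_n(S))$ and $\tilde\sigma_n(S):=\sigma_n(\phi_n(S))$ are, respectively, progressively measurable on $K$ and a stopping time on $K$.

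Each log-return is shifted by at most $(b-a)/(2n)$, and stock prices on $K$ are uniformly bounded (since $se^{-bN}\le S_k\le se^{bN}$), so a routine estimate yields a constant $C$ independent of $n$ with $\sup_{S\in K,\,k\le N}|\phi_n(S)_k-S_k|\le C/n$. Combined with $|\gamma_n|\le M$, a telescoping computation on (\ref{2.3}) gives $\sup_{S,k}|V^{\tilde\pi_n}_k(S)-V^{\pi_n}_k(\phi_n(S))|\le 2MNC/n$. Because $F_k,G_k$ are continuous on the compact set $K$, they admit a common modulus of continuity $\omega$, so $|\mathbb{H}(\tilde\sigma_n(S),l,S)-\mathbb{H}(\tilde\sigma_n(S),l,\phi_n(S))|\le 2\omega(C/n)$ for every $S\in K$ and every $l\le N$.

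Since $(\pi_n,\sigma_n)$ is a perfect hedge on $K_n$, applying (\ref{3.2}) at the point $\phi_n(S)$ yields $V^{\pi_n}_{\tilde\sigma_n(S)\wedge l}(\phi_n(S))\ge\mathbb{H}(\tilde\sigma_n(S),l,\phi_n(S))$ for every $S\in K$ and $l$. Combining with the two error bounds gives $V^{\tilde\pi_n}_{\tilde\sigma_n(S)\wedge l}(S)+2MNC/n+2\omega(C/n)\ge \mathbb{H}(\tilde\sigma_n(S),l,S)$, so raising the initial capital to $V^{\pi_n}_0+2MNC/n+2\omega(C/n)$ turns the extended strategy $(\tilde\pi_n,\tilde\sigma_n)$ into a perfect hedge on $K$ in the sense of (\ref{2.4}). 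Hence $\mathbb{V}\le\mathbb{V}_n+\epsilon+2MNC/n+2\omega(C/n)$, and taking $\liminf_{n\to\infty}$ followed by $\epsilon\downarrow 0$ yields the claim. The main obstacle is choosing the projection $\phi_n$ so that the lifted objects $\tilde\gamma_n,\tilde\sigma_n$ remain admissible on $K$; the coordinate-wise log-return rounding above achieves this by construction, after which the uniform bound from Lemma \ref{lem3.0} and the uniform continuity of the payoffs on the compact set $K$ make the remaining estimates routine.
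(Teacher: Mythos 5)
Your proposal is correct and follows essentially the same route as the paper: project paths of $K$ onto $K_n$ by rounding log-returns, lift an almost-optimal multinomial hedge with the uniform bound $|\gamma|\le M$ from Lemma \ref{lem3.0}, and use uniform continuity of $F_k,G_k$ on the compact set $K$ to absorb the $O(1/n)$ discretization error into the initial capital. The only differences (nearest-point rounding instead of rounding down, and an explicit modulus-of-continuity estimate instead of the paper's qualitative bound (\ref{3.5})) are cosmetic.
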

\begin{proof}
Fix $\epsilon>0$. Let $n\in\mathbb{N}$.
Consider the multinomial model for which the stock price process
$S=(S_0,...,S_N)$ lies in the set $K_n$.
Let $(\pi,\sigma)$ be a perfect hedge for this multinomial model such that
$\pi=(\mathbb {V}_n+\epsilon,\gamma)$.
From lemma \ref{lem3.0} it follows that we can assume that
$|\gamma(k,S)|\leq M$ for any $k,S$.
Consider the map $\psi_n:K\rightarrow K_n$
which is given by $\psi_n(y_0,...,y_N)=(x_0,...,x_N)$
where
\begin{eqnarray*}
&x_0=y_0 \ \ \mbox{and} \ \ \mbox{for} \ \ k>0 \ \ \ln y_{i+1}=\ln y_i\\
& +sgn(\ln x_{i+1}-\ln x_i)(a+(b-a)[n(|\ln x_{i+1}-\ln x_i|-a)/(b-a)]/n)
\end{eqnarray*}
where $[v]$ is the integer part of $v$ and $sgn(v)=1$ for $v>0$
and $=-1$ otherwise.
For the original
financial market define a hedge $(\tilde\pi,\tilde\sigma)$ by the following
relations, $\tilde\pi=(\mathbb {V}_n+2\epsilon,\tilde\gamma)$
where
\begin{equation}\label{3.4}
\tilde\gamma(k,S)=\gamma(k,\psi_n(S)) \ \ \mbox{and} \ \ \tilde\sigma(S)=\sigma(\psi_n(S)) , \ \ k<N, \ S\in K.
\end{equation}
Observe that $\tilde\gamma$ is a progressively measurable map and
$\tilde\sigma$ is a stopping time. Thus $(\tilde\pi,\tilde\sigma)$ is indeed
a hedge for the original financial market.
From the continuity of the functions $F_k,G_k$, $k=0,1,...,N$
it follows that for sufficiently large $n$
\begin{equation}\label{3.5}
||S-\psi_n(S)||+|F_k(S)-F_k(\psi_n(S))|+|G_k(S)-G_k(\psi_n(S))|<\frac{\epsilon}{2 M N}, \ \ S\in K, k\leq N,
\end{equation}
where we denote
$||(z_0,...,z_N)||=\max_{0\leq i\leq N}|z_i|$.
Let $S\in K$. Set $Y^{(n)}=\psi_n(S)$.
From (\ref{3.5}) and the fact that $\gamma\in [-M,M]$
it follows that (for sufficiently large $n$)
 for any $l\leq N$ we get
\begin{eqnarray*}
&V^{\tilde\pi}_{l\wedge \tilde\sigma(S)}(S)=\epsilon+V^\pi_{l\wedge\sigma(Y^{(n)})}(Y^{(n)})+\\
&\sum_{k=0}^{l\wedge\tilde\sigma(S)-1} \gamma(k,Y^{(n)})((S_{k+1}-S_k)-(Y^{(n)}_{k+1}-Y^{(n)}_k))\geq \\
&\epsilon+\mathbb{H}(\sigma(Y^{(n)}),l,Y^{(n)})-2 N M||S-Y^{(n)}|\geq \mathbb{H}(\tilde\sigma(S),l,S).
\end{eqnarray*}
Thus for sufficiently large $n$, $\mathbb V\leq2\epsilon+\mathbb V_n$. Since $\epsilon>0$ was arbitrary
this concludes the proof.
\end{proof}

\subsection{Analysis of the multinomial models}
Fix $n\in\mathbb{N}$.
Let $\Omega=\mathbb{R}^{N+1}$. Define the piecewise
constant
stochastic processes
\begin{eqnarray*}
&S^{(n)}_t(z_0,...,z_N):=z_{[nt]}, \ \ Y^{(n)}_t(z_0,...,z_N)=F_{[nt]}(z_0,...,z_N) \\
&\mbox{and} \ \ X^{(n)}_t=G_{[nt]}(z_0,...,z_N), \ \ z\in \Omega, \ t\in [0,1].
\end{eqnarray*}
Let ${\{\mathcal{F}^{(n)}_t\}}_{t=0}^1$ be the filtration
which is generated by the process $S^{(n)}$.
The set $K_n\subset \Omega$ is finite, and so, there exists a probability measure $\mathbb P_n$ on
$\Omega$ which is supported on $K_n$ and gives
to any element in $K_n$ a positive probability.
Thus we can apply
Theorem 2.2
in \cite{KK} for
a market with one risky asset $S^{(n)}$ which lives on
the probability space $(\Omega,\mathcal{F}^{(n)}_1,\mathbb P_n)$,
and a game option with the payoffs $Y^{(n)}\leq X^{(n)}$. In this case the super--replication price
coincides with $\mathbb{V}_n$ which is given by (\ref{3.2+}).
Thus let $\mathcal{M}_n\subset \mathcal{M}$ be the set of
all martingale laws which are supported on the set $K_n$ and
$\mathbb{T}$ be the set of all stopping times (with respect
to the filtration ${\{\mathcal{F}^{(n)}_t\}}_{t=0}^1$)
with values in the set $[0,1]$.
From Theorem 2.2 in \cite{KK} and the fact that
the processes $Y^{(n)},X^{(n)}$ are piecewise constant we obtain
$$\mathbb V_n=\sup_{\mathbb P\in \mathcal{M}_n}
\sup_{\tau\in\mathbb{T}}\inf_{\sigma\in\mathbb{T}}
\mathbb{E}_{\mathbb P}(X^{(n)}_{\sigma}\mathbb{I}_{\sigma<\tau}+
Y^{(n)}_{\tau}\mathbb{I}_{\sigma\geq\tau})=\sup_{\mathbb P\in \mathcal{M}_n}
\sup_{\tau\in\mathcal{T}}\inf_{\sigma\in\mathcal{T}}
\mathbb{E}_{\mathbb P}\mathbb{H}(\sigma,\tau,Z).
$$
Since $\mathcal{M}_n\subset\mathcal M$, we conclude that for any $n\in\mathbb N$,
$$\mathbb V_n\leq \sup_{\mathbb P\in \mathcal{M}}
\sup_{\tau\in\mathcal{T}}\inf_{\sigma\in\mathcal{T}}
\mathbb{E}_{\mathbb P}\mathbb{H}(\sigma,\tau,Z).$$
This together with Lemma \ref{lem3.1} completes the proof
of (\ref{2.7}).\qed

\begin{rem}
An interesting question which remains open is
the limit behavior where the maturity date $N$ goes to infinity.
Namely, for a given $N\in\mathbb{N}$ consider
the interval $I:=I(N)= \left[\frac{a}{\sqrt N},\frac{b}{\sqrt N}\right]$.
Our conjecture is that for regular enough payoffs
the limit behavior of the super--replication prices $\mathbb V:=\mathbb{V}(N)$
as $N\rightarrow\infty$ is equal to
a stochastic game version of $G$--expectation,
defined on the canonical space $\mathcal{C}[0,T]$.
For European options the limit is the standard $G$--expectation, this follows from
\cite{D} and \cite{DNS}. It seems that the tool which was employed
in \cite{D} can work for the American options case. In this case the limit of the
super--replication prices is equal
to an optimal stopping version of $G$--expectation. However
for game options the problem is more complicated.
\end{rem}

\section{Extension for upper semicontinuous payoffs}\label{sec:4}\setcounter{equation}{0}
In this section we prove
(\ref{2.7}) for the case where
the functions $F_k, G_k:K\rightarrow\mathbb{R}_{+}$, $k\leq N$
are upper semicontinuous (and not necessarily continuous).

Let $\mathcal{A}=\max_{0\leq k\leq N}\sup_{x\in K} G_k(x)<\infty$.
By using similar arguments as in Lemma 5.3 in \cite{DS}
it follows that for any
$k=0,1,...,N$ there are two sequences
of continuous functions $\{F^{(n)}_k\}_{n=1}^\infty$
and $\{G^{(n)}_k\}_{n=1}^\infty$
which satisfy the following:\\
(i). $\mathcal{A} \geq G^{(n)}_k \geq G_k$, $\mathcal{A}\geq F^{(n)}_k \geq F_k$ and
$G^{(n)}_k\geq F^{(n)}_k$,
for all $n$.\\
(ii).
\begin{equation}\label{4.1}
\lim\sup_{n\rightarrow\infty}G^{(n)}_k(x_n)\leq G^{(n)}(x) \ \ \mbox{and} \ \ \lim\sup_{n\rightarrow\infty}F^{(n)}_k(x_n)\leq F(x)
\end{equation}
for every $x\in K$ and every
sequence  $\{x_n\}_{n=1}^\infty\subset K$
with $\lim_{n\rightarrow\infty} x_n=x$.\\
(iii). Furthermore, for any $n\in\mathbb{N}$ and $u,v\in K$, $F^{(n)}_k(u)=F^{(n)}_k(v)$
and $G^{(n)}_k(u)=G^{(n)}_k(v)$ if $u_i=v_i$ for all $i=0,1,...,k$.

Let $\mathbb{V}$
be the super--replication
price which corresponds to the payoff functions $F,G$,
and for any $n\in\mathbb{N}$ let $\mathbb{V}_n$
be the super--replication price which corresponds
to the payoff functions $F^{(n)},G^{(n)}$.

From (i), it follows that for any $n\in\mathbb{N}$, $\mathbb V\leq \mathbb V_n$.
Thus from Theorem 2.1 (for continuous payoffs) it follows
that
$$\mathbb V\leq \lim\inf_{n\rightarrow\infty}
\sup_{\mathbb P\in \mathcal{M}}
\sup_{\tau\in\mathcal{T}}\inf_{\sigma\in\mathcal{T}}
\mathbb{E}_{\mathbb P}\mathbb{H}^{(n)}(\sigma,\tau,Z)
$$
where
$$\mathbb{H}^{(n)}(k,l,S)=G^{(n)}_k(S)\mathbb{I}_{k<l}+F^{(n)}_l(S)\mathbb{I}_{l\leq k}, \ \ k,l=0,1,...,N, \ \ S\in K.$$
We conclude that in order to establish
(\ref{2.7})
we need to prove the following lemma.
\begin{lem}
$$\sup_{\mathbb P\in \mathcal{M}}
\sup_{\tau\in\mathcal{T}}\inf_{\sigma\in\mathcal{T}}
\mathbb{E}_{\mathbb P}\mathbb{H}(\sigma,\tau,Z)=\lim\inf_{n\rightarrow\infty}
\sup_{\mathbb P\in \mathcal{M}}
\sup_{\tau\in\mathcal{T}}\inf_{\sigma\in\mathcal{T}}
\mathbb{E}_{\mathbb P}\mathbb{H}^{(n)}(\sigma,\tau,Z).$$
\end{lem}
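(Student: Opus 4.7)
The inequality $J\le\liminf_n J_n$ (where $J$ denotes the left-hand side of the lemma and $J_n$ the inner expression on the right) is immediate from property (i) of the approximants, which gives $F\le F^{(n)}$ and $G\le G^{(n)}$, hence $\mathbb{H}\le\mathbb{H}^{(n)}$. The plan is to prove the reverse, $\liminf_n J_n\le J$, by extracting a weak-convergence limit of near-optimisers and exploiting the USC structure of the payoffs.

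The set $K$ is compact (each coordinate lies in $[s e^{-bN},s e^{bN}]$), so $\mathcal{P}(K)$ is weakly sequentially compact, and $\mathcal{M}$ is weakly closed because the martingale condition \eqref{2.6} rewrites as $\mathbb{E}_{\mathbb{P}}[(Z_{k+1}-Z_k)\phi(Z_0,\dots,Z_k)]=0$ for every $\phi\in C_b(\mathbb{R}^{k+1})$. Choose $\varepsilon_n\downarrow 0$ and $(\mathbb{P}_n,\tau_n)\in\mathcal{M}\times\mathcal{T}$ with $J_n-\varepsilon_n\le\inf_{\sigma\in\mathcal{T}}\mathbb{E}_{\mathbb{P}_n}\mathbb{H}^{(n)}(\sigma,\tau_n,Z)$. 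Along a subsequence $\mathbb{P}_n\Rightarrow\mathbb{P}\in\mathcal{M}$. Since $\tau_n$ takes values in the finite set $\{0,\dots,N\}$, the joint laws $\mu_n:=\mathrm{Law}_{\mathbb{P}_n}(Z,\tau_n)$ on the compact product $K\times\{0,\dots,N\}$ are tight, and a further subsequence satisfies $\mu_n\Rightarrow\mu$ with $K$-marginal $\mathbb{P}$. Disintegrating $\mu=\mathbb{P}\otimes\kappa$, the stopping-time property of the $\tau_n$'s passes to the limit: testing $\mu_n$ against continuous functions that factorise across the time-$l$ split shows that $\kappa(z,\{0,\dots,l\})$ depends on $z$ only through $(z_0,\dots,z_l)$. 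Thus $\kappa$ represents an adapted randomised stopping time $\hat\tau$, realised on the enlargement $(K\times[0,1],\mathbb{P}\otimes\lambda)$ with filtration $\mathcal{G}_k=\mathcal{F}_k\vee\sigma(U)$.

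For every $\sigma\in\mathcal{T}$, the $\varepsilon_n$-optimality yields $J_n-\varepsilon_n\le\int\mathbb{H}^{(n)}(\sigma(z),l,z)\,d\mu_n(z,l)$. When $z\mapsto\mathbb{H}^{(n)}(\sigma(z),l,z)$ is upper semicontinuous in $z$ for each fixed $l$, Portmanteau combined with \eqref{4.1} gives $\limsup_n\int\mathbb{H}^{(n)}(\sigma(z),l,z)\,d\mu_n\le\int\mathbb{H}(\sigma(z),l,z)\,d\mu$. I take $\sigma$ to be a near-optimal seller response against $\hat\tau$ in the limit game, chosen as a first-entrance time into a closed stopping region with $\mathbb{P}$-negligible boundary, so that the integrand is USC and the bound applies. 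Taking $\inf$ over such $\sigma$ produces $\liminf_n J_n\le\inf_{\sigma\in\mathcal{T}}\mathbb{E}_{\mathbb{P}\otimes\lambda}\mathbb{H}(\sigma,\hat\tau,Z)$. The final step is to bound this randomised quantity by $J$: in this discrete-time finite-horizon setting with $F\le G$ and USC payoffs, the Dynkin game admits a saddle-point in pure strategies (established by backward induction on $U_k=\min(G_k,\max(F_k,\mathbb{E}_{\mathbb{P}}[U_{k+1}\mid\mathcal{F}_k]))$), and its value is invariant under enlarging the filtration by the independent uniform $U$, which eliminates the randomisation of $\hat\tau$.

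The principal obstacle lies in the coupling of two delicate passages: verifying the adaptedness of the limit kernel $\kappa$ from weak convergence alone, and the derandomisation that $\sup_\tau\inf_\sigma$ on the canonical filtration bounds the analogous quantity on the enlarged filtration. Both steps exploit the discrete-time, finite-horizon structure together with the $F\le G$ constraint; the remainder is standard weak-convergence machinery of the kind used in Section~3.
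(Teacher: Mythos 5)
The decisive step in your argument --- that the adaptedness of the limit kernel $\kappa$ ``passes to the limit'' by testing $\mu_n$ against continuous functions that factorise across the time-$l$ split --- does not work, and this is a genuine gap rather than a technicality. Adaptedness of $\tau_n$ is the identity $\mathbb{E}_{\mathbb{P}_n}\bigl[\mathbb{I}_{\tau_n\le l}f(Z)\bigr]=\mathbb{E}_{\mathbb{P}_n}\bigl[\mathbb{I}_{\tau_n\le l}\,\mathbb{E}_{\mathbb{P}_n}(f(Z)\mid Z_0,\dots,Z_l)\bigr]$, and the conditional expectation operator here depends on $\mathbb{P}_n$; it is not continuous under weak convergence, so no fixed family of (product) continuous test functions certifies the property uniformly in $n$, and the class of joint laws with adapted kernels is not weakly closed. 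Concretely, already for $N=2$ one can let $\mathbb{P}_n$ put an $O(1/n)$ perturbation on the first log-increment that encodes which conditional law the second increment will follow (branch probabilities adjusted so that each $\mathbb{P}_n$ is a martingale law supported on $K$); a time-$1$ stopping rule that reads this fine structure is adapted for every $n$, yet in the weak limit the perturbation vanishes and the limiting kernel $\kappa(z,\{1\})$ becomes a function of $z_2$. With a foreseeing $\hat\tau$ the bound $\inf_{\sigma}\mathbb{E}\,\mathbb{H}(\sigma,\hat\tau,Z)\le J$ is false in general, so the derandomisation step at the end has nothing to rest on. This is precisely the point of the paper's Remark 4.2: standard weak convergence is not sufficient for stability of optimal stopping/Dynkin values. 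The paper's proof therefore carries the regular conditional distributions $h^{(n)}_k(Z_0,\dots,Z_k)$ along in the converging vector (a discrete analogue of Aldous' extended weak convergence), applies Skorokhod representation, and deduces properties (I)--(III); property (III) (conditional independence of $\{\rho=k\}$ from $\mathcal{G}_N$ given $\mathcal{G}_k$) is the legitimate substitute for your adaptedness claim and is what yields the key inequality \eqref{4.4}. Your proof needs this extra ingredient (or an equivalent one); it cannot be extracted from the weak limit of $\mathrm{Law}_{\mathbb{P}_n}(Z,\tau_n)$ alone.

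A secondary, probably repairable, soft spot is the Portmanteau step. Since the payoffs $\mathbb{H}^{(n)}$ vary with $n$ and the measures $\mu_n$ have marginals $\mathbb{P}_n\neq\mathbb{P}$, you need a hypo-convergence-type bound $\limsup_n\int\mathbb{H}^{(n)}(\sigma(z),l,z)\,d\mu_n\le\int\mathbb{H}(\sigma(z),l,z)\,d\mu$ for the composed integrand, whose discontinuities include the boundaries of the sets $\{\sigma=k\}$; you propose to take $\sigma$ as a first-entrance time into a closed region with $\mathbb{P}$-null boundary, but you do not show that near-optimal responses of this form exist, nor that nullness under the limit law controls the integrals against $\mu_n$ together with \eqref{4.1}. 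The paper's device avoids both issues: after Skorokhod a.s.\ representation it fixes an $\epsilon$-optimal $\tilde\sigma$ for the limit game, approximates the indicators $\mathbb{I}_{\tilde\sigma=k}$ in probability by continuous functions, builds honest stopping times $\tilde\sigma_n$ for the prelimit processes, and concludes with property (ii) and Fatou on an event of probability at least $1-\epsilon$. If you keep your scheme, this part needs an actual approximation argument; but the argument only closes once the adaptedness (conditional independence) of the limit $\hat\tau$ has been secured as above.
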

\begin{proof}
From (i), $$\sup_{\mathbb P\in \mathcal{M}}
\sup_{\tau\in\mathcal{T}}\inf_{\sigma\in\mathcal{T}}
\mathbb{E}_{\mathbb P}\mathbb{H}(\sigma,\tau,Z)\leq\lim\inf_{n\rightarrow\infty}
\sup_{\mathbb P\in \mathcal{M}}
\sup_{\tau\in\mathcal{T}}\inf_{\sigma\in\mathcal{T}}
\mathbb{E}_{\mathbb P}\mathbb{H}^{(n)}(\sigma,\tau,Z).$$
Thus we will prove that (infact this is the inequality that we need)
\begin{equation}\label{4.1+}
\sup_{\mathbb P\in \mathcal{M}}
\sup_{\tau\in\mathcal{T}}\inf_{\sigma\in\mathcal{T}}
\mathbb{E}_{\mathbb P}\mathbb{H}(\sigma,\tau,Z)\geq\lim\inf_{n\rightarrow\infty}
\sup_{\mathbb P\in \mathcal{M}}
\sup_{\tau\in\mathcal{T}}\inf_{\sigma\in\mathcal{T}}
\mathbb{E}_{\mathbb P}\mathbb{H}^{(n)}(\sigma,\tau,Z).
\end{equation}
For any $n\in\mathbb N$, let $\mathbb P_n\in \mathcal{M}$ and
$\rho_n\in\mathcal{T}$
be such that
\begin{equation}\label{4.2}
\sup_{\mathbb P\in \mathcal{M}}\sup_{\tau\in\mathcal{T}}
\inf_{\sigma\in\mathcal{T}}
\mathbb{E}_{\mathbb P}\mathbb{H}^{(n)}(\sigma,\tau,Z)<\frac{1}{n}+
\inf_{\sigma\in\mathcal{T}}
\mathbb{E}_{\mathbb P_n}\mathbb{H}^{(n)}(\sigma,\rho_n,Z).
\end{equation}
Consider the set $\Pi$ of all probability measures
on $K$, induced with the topology of weak convergence.
Observe that $\Pi$ is a compact set (this follows from Prohorov's theorem, see \cite{B}
Section 1
for details).
From the existence of the regular distribution function
(for details see \cite{S} page 227) we obtain
that there exist measurable functions
$h^{(n)}_k:\mathbb{R}^{k+1}\rightarrow \Pi$,
$k<N$,
such that for any Borel set $A\subset K$ and $n\in\mathbb {N}$
$$\mathbb P_n((Z_0,...,Z_N)\in A|Z_0,...,Z_k)=h^{(n)}_k(Z_0,...,Z_k)(A), \ \ \mathbb{P}_n \ \ \mbox{a.s}.$$
For any $n\in\mathbb{N}$ consider the distribution of (under the measure $\mathbb P_n$)
$$(\rho_n, Z_0,...,Z_N,h^{(n)}_0(Z_0),...,h^{(n)}_{N-1}(Z_0,...,Z_{N-1}))$$
on the space $[0,N]\times K\times \Pi^N $ with the product topology.

Since the space $[0,N]\times K\times \Pi^N$ is compact
then by Prohorov's theorem there is a subsequence which for simplicity we still denote by
$$(\rho_n,Z_0,...,Z_N,h^{(n)}_0(Z_0),...,h^{(n)}_{N-1}(Z_0,...,Z_{N-1})), \ \ n\in\mathbb{N}$$
which converges weakly.
Thus from the Skorohod representation theorem (see \cite{Du})
we obtain that
we can redefine the
sequence $$(\rho_n, Z_0,...,Z_N,h^{(n)}_0(Z_0),...,h^{(n)}_{N-1}(Z_0,...,Z_{N-1})), \ \ n\in\mathbb N$$
on a probability space $(\Omega,\mathcal{F},P)$ such that
we have $P$ a.s convergence
\begin{equation}\label{4.3}
(\rho_n, Z^{(n)}_0,...,Z^{(n)}_N,h^{(n)}_0(Z^{(n)}_0),...,h^{(n)}_{N-1}(Z^{(n)}_0,...,Z^{(n)}_{N-1}))\rightarrow (\rho, U_0,...,U_N,W_1,...,W_N).
\end{equation}
Redefining means that for any $n\in\mathbb{N}$
the distribution of $$(\rho_n,Z_0,...,Z_N,h^{(n)}_0(Z_0),...,h^{(n)}_{N-1}(Z_0,...,Z_{N-1}))$$
under $\mathbb P_n$ is equal to the distribution of
$$(\rho_n, Z^{(n)}_0,...,Z^{(n)}_N,h^{(n)}_0(Z^{(n)}_0),...,h^{(n)}_{N-1}(Z^{(n)}_0,...,Z^{(n)}_{N-1}))$$
under $P$.
Let $\mathcal{G}_k=\sigma\{U_0,...,U_k\}$, $k\leq N$
be the filtration which is generated by $U_0,...,U_N$.
Denote by $\mathcal{T}_U$ the set of all stopping times
(with respect to this filtration) with values
in the set $\{0,1,...,N\}$.
Next we show the following three properties:
\\
(I). The distribution of $(U_0,...,U_N)$ (on the space $K$)
is an element in $\mathcal{M}$.\\
(II). For any $k$, the conditional distribution
of $(U_0,...,U_N)$ given $U_0,...,U_k$ equals to
$W_k$.\\
(III). For any $k$,
the event $\{\tau=k\}$ and $\mathcal{G}_N$ are independent
given $\mathcal{G}_k$.

Denote by $E$ the expectation with respect to $P$. From
Lebesgue's dominated convergence theorem
if follows
that
for any $k\leq N$ and
continuous bounded functions
$f:\mathbb{R}^{k+1}\rightarrow\mathbb{R}$,
$g:K\rightarrow\mathbb{R}$ we have
\begin{eqnarray}\label{4.3+}
&E((U_N-U_k)f(U_0,...,U_k))=\\
&\lim_{n\rightarrow\infty}E((Z^{(n)}_N-Z^{(n)}_k)f(Z^{(n)}_0,...,Z^{(n)}_k))\nonumber\\
&=\lim_{n\rightarrow\infty}\mathbb{E}_{\mathbb{P}_n}((Z_N-Z_k)f(Z_0,...,Z_k))=0,\nonumber
\end{eqnarray}
where the last equality follows the fact that $\mathbb{P}_n\in\mathcal{M}$
is a martingale distribution.
From the definition of the topology on $\Pi$, we also have
\begin{eqnarray}\label{4.3++}
&E(f(U_0,...,U_k)g(U_0,...,U_N))=\\
&\lim_{n\rightarrow\infty}E(f(Z^{(n)}_0,...,Z^{(n)}_k)g(Z^{(n)}_0,...,Z^{(n)}_N))\nonumber\\
&=\lim_{n\rightarrow\infty}E(f(Z^{(n)}_0,...,Z^{(n)}_k)\int g(y) h^{(n)}_{k}(Z^{(n)}_0,...,Z^{(n)}_{k})(dy))\nonumber\\
&=E(f(U_0,...,U_k)\int g(y)W_k(dy)).\nonumber
\end{eqnarray}
By applying standard density arguments
we obtain that (\ref{4.3+}) implies (I)
and (\ref{4.3++}) implies (II).
Next, fix $k$. From (II) and the fact that $\rho_n$
is a stopping time we obtain
that
\begin{eqnarray*}
&E(\mathbb{I}_{\rho=k}E(g(U_0,...,U_N)|\mathcal{G}_k))=
E(\mathbb{I}_{\rho=k}\int g(y)W_k(dy))=\\
&\lim_{n\rightarrow\infty} E(\mathbb{I}_{\rho_n=k}\int g(y)h^{(n)}_k(Z^{(n)}_0,...,Z^{(n)}_k)(dy))=\\
&\lim_{n\rightarrow\infty}E(\mathbb{I}_{\rho_n=k}E(g(Z^{(n)}_0,...,Z^{(n)}_N)|Z^{(n)}_0,...,Z^{(n)}_k))=\\
&\lim_{n\rightarrow\infty}E(g(Z^{(n)}_0,...,Z^{(n)}_N)\mathbb{I}_{\rho_n=k})=E(g(U_0,...,U_N)\mathbb{I}_{\rho=k}),
\end{eqnarray*}
and again, from standard density
arguments we conclude that
$$E(\mathbb{I}_{\rho=k}|\mathcal{G}_k)=E(\mathbb{I}_{\rho=k}|\mathcal{G}_N)$$
and (III) follows.
Property (III) is important
because it implies
the following. For any
stochastic process $(L_0,...,L_N)$
which is adapted to the filtration $\mathcal{G}_k$, $k\leq N$,
we have
\begin{equation}\label{4.4}
EL_{\rho}\leq \sup_{\tau\in\mathcal{T}_U}EL_{\tau}.
\end{equation}
The proof of this implication can be done in the same way
as in Lemma 3.3 in \cite{D4}, and so we omit it.

Now we arrive to the final step of the proof.
Choose $0<\epsilon<1$.
Let $\tilde\sigma\in\mathcal{T}_U$ be such that
\begin{equation}\label{4.5}
\inf_{\sigma\in\mathcal{T}_U} E\mathbb{H}(\sigma,\rho,U)>E\mathbb{H}(\tilde\sigma,\rho,U)-\epsilon,
\end{equation}
where $U=(U_0,...,U_N)$.
For any $k$ there exists a continuous function
$f_k:\mathbb{R}^{k+1}\rightarrow\mathbb{R}$
such that $P(\mathbb{I}_{\tilde\sigma=k}\neq f_k(U_0,...,U_k))<\frac{\epsilon}{2^{k+1}}$.
For any $n\in\mathbb{N}$ define
$\tilde\sigma_n=N\wedge\min\{k|f_k(Z^{(n)}_0,...,Z^{(n)}_k)>\frac{1}{2}\}$.
Clearly $\tilde\sigma_n$ is a stopping time with respect to the filtration
generated by $Z^{(n)}_0,...,Z^{(n)}_N$.
Let $C$ be the following set
$$C=\{\omega\in\Omega | \exists m:=m(\omega) \ \ \mbox{such} \ \mbox{that} \ \forall n>m \ \tilde\sigma_n(\omega)=\tilde\sigma(\omega)\}.$$
From (\ref{4.3}) and the fact that $f_k$, $k\leq N$ are
continuous functions, it follows
that
\begin{equation}\label{4.6}
P(C)\geq 1-\sum_{i=0}^N \frac{\epsilon}{2^{i+1}}\geq 1-\epsilon.
\end{equation}
Observe that (\ref{4.3}) also implies that a.s. $\rho_n(\omega)=\rho(\omega)$ for sufficiently large $n$ (which depends on $\omega$).
Thus from property (ii) we get
$$\mathbb{H}(\tilde\sigma,\rho,U)\mathbb{I}_C\geq\lim\sup_{n\rightarrow\infty}\mathbb{H}^{(n)}(\tilde\sigma_n,\rho_n,Z^{(n)})\mathbb{I}_C$$
where $Z^{(n)}=(Z^{(n)}_0,...,Z^{(n)}_N)$. Since $\mathbb{H}$ and $\mathbb{H}^{(n)}$ are uniformly bounded by $\mathcal{A}$
then from Fatou's lemma we derive
\begin{equation}\label{4.7}
E\mathbb{H}(\tilde\sigma,\rho,U)\mathbb{I}_C\geq\lim\sup_{n\rightarrow\infty}E\mathbb{H}^{(n)}(\tilde\sigma_n,\rho_n,Z^{(n)})\mathbb{I}_C.
\end{equation}
Finally, let $\mathcal{Q}$ be the distribution of $(U_0,...,U_N)$. From (I) it follows
that $\mathcal{Q}\in\mathcal{M}$ is a martingale distribution.
 It is well known that for Dynkin games the $\inf$ and the $\sup$ can be exchanged (for details see \cite{O}).
 Thus from
(\ref{4.2})--(\ref{4.3}),
(\ref{4.4}) for $L_k=H(\sigma,k,U)$ and (\ref{4.5})--(\ref{4.7})
we get
\begin{eqnarray*}
&\sup_{\mathbb P\in \mathcal{M}}
\sup_{\tau\in\mathcal{T}}\inf_{\sigma\in\mathcal{T}}
\mathbb{E}_{\mathbb P}\mathbb{H}(\sigma,\tau,Z)\geq
\sup_{\tau\in\mathcal{T}_U}\inf_{\sigma\in\mathcal{T}_U}E\mathbb{H}(\sigma,\tau,U)\\
&=\inf_{\sigma\in\mathcal{T}_U}\sup_{\tau\in\mathcal{T}_U}E\mathbb{H}(\sigma,\tau,U)
\geq \inf_{\sigma\in\mathcal{T}_U}E\mathbb{H}(\sigma,\rho,U)\\
&\geq E\mathbb{H}(\tilde\sigma,\rho,U)\mathbb{I}_C-\epsilon
\geq\lim\sup_{n\rightarrow\infty}E\mathbb{H}^{(n)}(\tilde\sigma_n,\rho_n,Z^{(n)})\mathbb{I}_C-\epsilon\\
&\geq\lim\sup_{n\rightarrow\infty}E\mathbb{H}^{(n)}(\tilde\sigma_n,\rho_n,Z^{(n)})-\mathcal{A}\epsilon-\epsilon\\
&\geq\lim\sup_{n\rightarrow\infty}
\sup_{\mathbb P\in \mathcal{M}}
\sup_{\tau\in\mathcal{T}}\inf_{\sigma\in\mathcal{T}}
\mathbb{E}_{\mathbb P}\mathbb{H}^{(n)}(\sigma,\tau,Z)-\epsilon(\mathcal A+1),
\end{eqnarray*}
 and since $\epsilon$ was arbitrary we obtain (\ref{4.1+}) as required. The reason that we have $\lim\sup$ in the above equation
 and not $\lim\inf$ as in (\ref{4.1+}),
 is because we passed to a subsequence, but left the same notations.
 \end{proof}
\begin{rem}
 Let us notice that in order to obtain Lemma 4.1 we used a stronger form of
 the standard weak convergence. Namely we also required a convergence of the conditional
 distributions. This is the discrete analog of the extended weak convergence
 which introduced by Aldous in \cite{A} for continuous time processes.
 In general, the standard weak convergence is not sufficient
 for the convergence of the corresponding
 optimal stopping and Dynkin games values.
 \end{rem}

\bibliographystyle{spbasic}

{}

\end{document}